\newtheorem{claim}{Claim}
\title{Deficient basis estimation of noise spatial covariance matrix for rank-constrained spatial covariance matrix estimation method in blind speech extraction}
\name{\shortstack{Yuto Kondo$^1$,  Yuki Kubo$^1$, Norihiro Takamune$^1$, Daichi Kitamura$^2$, Hiroshi Saruwatari$^1$ \thanks{This work was supported by Japan-New Zealand Research Cooperative Program between JSPS and RSNZ, Grant number JPJSBP120201002, and JSPS KAKENHI Grant Numbers 19K20306, 19H01116, and 19H04131.}}}
\address{$^1$The University of Tokyo,
Tokyo, Japan\\
	$^2$National Institute of Technology, Kagawa College,
Kagawa, Japan}
\begin{document}
\ninept
\newcommand{\rbm}[1]{\bm{\mathrm{#1}}}
\newcommand{\tr}{\mathrm{tr}}
\newcommand{\Htenchi}{\mathsf{H}}
\newcommand{\tenchi}{\mathsf{T}}
\newcommand{\xij}{\bm{x}_{ij}}
\newcommand{\thp}{\Theta_{{\rm p}}}
\newcommand{\Tthp}{\tilde{\Theta}_{{\rm p}}}
\newcommand{\thc}{\Theta_{{\rm c}}}
\newcommand{\Tthc}{\tilde{\Theta}_{{\rm c}}}
\newcommand{\cijt}{\bm{z}_{ij}^{({\rm t})}}
\newcommand{\cijn}{\bm{z}_{ij}^{({\rm n})}}
\newcommand{\sijt}{s_{ij}^{({\rm t})}}
\newcommand{\rijt}{r_{ij}^{({\rm t})}}
\newcommand{\rijn}{r_{ij}^{({\rm n})}}
\newcommand{\Trijt}{\tilde{r}_{ij}^{({\rm t})}}
\newcommand{\Trijn}{\tilde{r}_{ij}^{({\rm n})}}
\newcommand{\ait}{\bm{a}_{i}^{({\rm t})}}
\newcommand{\Rinilrma}{\rbm{R}_{i}'^{({\rm n})}}
\newcommand{\Rin}{\rbm{R}_{i}^{({\rm n})}}
\newcommand{\TRin}{\tilde{\rbm{R}}_{i}^{({\rm n})}}
\newcommand{\bi}{\bm{b}_{i}}
\newcommand{\vecb}{\bm{b}}
\newcommand{\biH}{\bm{b}_{i}^{\Htenchi}}
\newcommand{\lambdai}{\lambda_{i}}
\newcommand{\Tlambdai}{\tilde{\lambda}_{i}}
\newcommand{\win}[1]{\bm{w}_{i,#1}}
\newcommand{\Wi}{\rbm{W}_{i}}
\newcommand{\II}{\rbm{I}}
\newcommand{\TRinjx}{\tilde{\rbm{R}}_{ij}^{({\rm x})}}
\newcommand{\Hy}{\hat{\bm{y}}_{ij}^{({\rm n})}}
\newcommand{\Hrijt}{\hat{r}_{ij}^{({\rm t})}}
\newcommand{\Hrijn}{\hat{\rbm{R}}_{ij}^{({\rm n})}}
\newcommand{\ui}{\bm{u}_{i}}
\newcommand{\HS}{\hat{\rbm{T}}_{i}}
\newcommand{\ci}{\bm{c}_{i}}
\newcommand{\Lci}{\bm{c}_{i}^{*}}
\newcommand{\ciH}{\bm{c}_{i}^{\Htenchi}}
\newcommand{\Tci}{\tilde{\bm{c}}_{i}}
\newcommand{\TciH}{\tilde{\bm{c}}_{i}^{\Htenchi}}
\newcommand{\zi}{\bm{z}_{i}}
\newcommand{\Hcijt}{\hat{\bm{c}}_{ij}^{({\rm t})}}
\newcommand{\drawfig}[4]{ 
  \begin{figure}[#1]
  \centering \vspace{-0mm}
  \includegraphics[width=#2,clip]{#3.pdf} \vspace{-2mm}
  \caption{#4} \vspace{-4mm}
  \label{fig:#3}
  \end{figure}
}

\maketitle
\begin{abstract}
Rank-constrained spatial covariance matrix estimation (RCSCME) is a state-of-the-art blind speech extraction method applied to cases where one directional target speech and diffuse noise are mixed. In this paper, we proposed a new algorithmic extension of RCSCME. RCSCME complements a deficient one rank of the diffuse noise spatial covariance matrix, which cannot be estimated via preprocessing such as independent low-rank matrix analysis, and estimates the source model parameters simultaneously. In the conventional RCSCME, a direction of the deficient basis is fixed in advance and only the scale is estimated; however, the candidate of this deficient basis is not unique in general. In the proposed RCSCM model, the deficient basis itself can be accurately estimated as a vector variable by solving a vector optimization problem. Also, we derive new update rules based on the EM algorithm. We confirm that the proposed method outperforms conventional methods under several noise conditions.
\end{abstract}
\begin{keywords}
Blind speech extraction, diffuse noise, spatial covariance matrix, EM algorithm
\end{keywords}
%

\section{Introduction}
Blind speech extraction (BSE) is a technique for extracting a target speech signal from observed noisy mixture signals without any prior information, e.g., spatial locations of speech and noise sources and microphones. BSE can be interpreted as a special case of blind source separation (BSS) \cite{BSSinitial}; BSS aims to separate not only the target source but also the other sources. In this paper, we focus on the BSE problem for the observed noisy mixture that includes one directional target speech and diffuse background noise. Such BSE can be utilized for many applications including automatic speech recognition and hearing aid systems \cite{2009zanryu,BSSapplication2}.

In a determined or
overdetermined situation (number of microphones $\geq$ number of
sources), independent vector analysis \cite{intro:VBA1,intro:VBA2} and independent low-rank matrix analysis (ILRMA) \cite{intro:ILRMA,ILRMA_A,ILRMA_B,ILRMA_C} provide better BSS performance. These methods assume that the frequency-wise acoustic
path of each source can be modeled by a single time-invariant
spatial basis, which is the so-called \textit{steering vector}. In this model, the rank of a spatial covariance matrix (SCM) \cite{intro:FSCM} becomes unity in all frequencies. Thus, hereafter, we call these BSS techniques \textit{rank-1 methods}. Under diffuse noise conditions, rank-1 methods cannot separate directional sources in principle \cite{2009zanryu}, and the estimated directional sources are always contaminated with a diffuse noise component  remaining in the same direction. This is because the sources modeled by the rank-1 SCM (i.e., the steering vectors) cannot represent such spatially distributed noise components.

In contrast to rank-1 methods, multichannel nonnegative matrix factorization (MNMF) \cite{MNMFOzerov,intro:MNMF} can represent the spatially spread sources and diffuse noise because MNMF utilizes a full-rank SCM for each source. However, the estimation of the full-rank SCM has a huge computational cost and lacks
robustness against the parameter initialization \cite{intro:ILRMA}. FastMNMF \cite{intro:fMNMF1,intro:fMNMF2} is a model-constrained version of original MNMF and achieves an efficient SCM optimization with lower computational cost compared with that in MNMF, although its performance still depends on the initial values for the parameters.

To achieve fast and stable BSE under the diffuse noise condition, rank-constrained SCM estimation (RCSCME) \cite{INTRO:KUBO,LAST:KUBO} was proposed, where the mixture of one directional speech and diffuse background noise is assumed. Fig. \ref{fig:flow} illustrates a process flow of RCSCME. In RCSCME, a rank-1 method such as ILRMA is utilized as a preprocess for BSE. From the rank-1 method, $M$ estimated signals are obtained \cite{2009zanryu}; one includes target speech components contaminated with diffuse noise in the same direction and the other $M-1$ estimates consist of diffuse noise components in various directions, where $M$ is the number of microphones. Then, the rank-($M-1$) SCM of the diffuse noise are calculated from the $M-1$ noise estimates. RCSCME estimates both the deficient rank of the noise SCM, which is used to compose the full-rank SCM for diffuse noise, and the source model parameters based on the expectation-maximization (EM) algorithm. The estimated target speech signal can be obtained via multichannel Wiener filtering using the full-rank SCM of diffuse noise. In \cite{INTRO:KUBO}, it was confirmed that RCSCME can outperform ILRMA, MNMF, and FastMNMF in terms of the speech extraction performance.
\begin{figure*}[t]
 \begin{center}
  \includegraphics[scale=0.29]{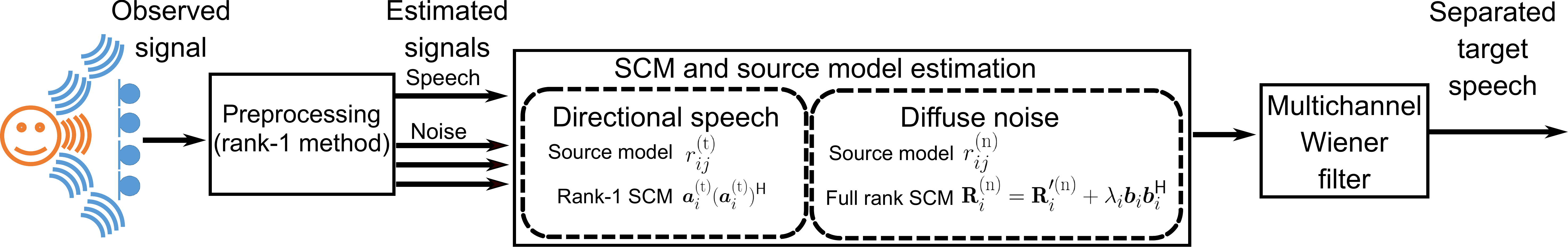}
  \vspace{0mm}
  \caption{Process flow of RCSCME.}
  \vspace{-8mm}
  \label{fig:flow}
 \end{center}
\end{figure*}

In this paper, we further improve the BSE performance by generalizing full-rank SCM estimation in conventional RCSCME. In the conventional method, the direction of the deficient basis in the rank-$(M-1)$ noise SCM is fixed to its eigenvector of the zero eigenvalue and only the scale of this basis is parameterized. 
However, the candidate of this deficient basis is not unique because any vector that is not included in the space spanned by column vectors of the rank-($M-1$) SCM can be used. In the proposed method, to estimate the optimal full-rank noise SCM, we parameterize the deficient vector itself and derive new update rules based on EM algorithm. Regarding its relation to prior works, the proposed RCSCME is interpreted as the world's first {\it spatial} model extension of the conventional RCSCME; this extension had been considered as a difficult {\it vector} optimization problem but this paper can successfully give the effective solution with a new mathematical claim.  Also, since the advantage in accurate SCM estimation under the fixed rank-($M-1$) noise SCM is still valid, the proposed RCSCME always outperforms the conventional full-rank SCM methods like MNMF and FastMNMF.

\section{Conventional RCSCME \cite{INTRO:KUBO}}
\subsection{Generative model}
In this section, we explain the generative model in conventional RCSCME. Let $\bm{x}_{ij}=(x_{ij,1}, \cdots, x_{ij,m}, \cdots, x_{ij,M})^{\tenchi}$ be the observed $M$-channel vector obtained by short-time Fourier transform (STFT)
, where
$~^{\tenchi}$ denotes the transpose, and $i=1,2,\dots, I$, $j=1,2,\dots, J$, and , $m=1,2,\dots, M$ are
the indices of frequency bins, time frames, and channels (microphones), respectively. The observed signal is a mixture of the directional target speech and diffuse noise as
\begin{equation}
\xij = \cijt + \cijn,
\end{equation}
where $\cijt\in\mathbb{C}^M$ and $\cijn\in\mathbb{C}^M$ are the source images of the target source and the diffuse noise, respectively.

The directional target source $\cijt$ can be modeled as
\begin{align}
\label{eq:cas}
\cijt     &=    \ait\sijt,\\
\sijt|\rijt &\sim \mathcal{N}_{c}(0,\rijt),
\end{align}
where $\ait\in \mathbb{C}^M$, $\sijt$, $\rijt>0$ are the steering vector, dry source component, and the time-frequency variance of the directional target source, respectively, and $\mathcal{N}_c(0, r)$ is the zero-mean circularly symmetric complex Gaussian distribution with a variance $r$. Note that $M$ steering vectors, $\bm{a}_{i,1}, \bm{a}_{i,2}, \dots, \bm{a}_{i,M}$, are estimated in advance via the rank-1 method, and we define the $n_\mathrm{t}$th steering vector $\bm{a}_{i,n_\mathrm{t}}$ corresponds to $\bm{a}_i^{\mathrm{(t)}}$. This channel selection of the target source can be achieved based on the kurtosis values of each estimated signal \cite{INTRO:KUBO}. The model (\ref{eq:cas}) assumes that the source image of directional target source has time-invariant acoustic paths represented by the single spatial basis, steering vector $\ait$, and the rank of SCM for $\cijt$ becomes unity.
In addition, since the power spectrogram of speech signals has sparsity property, we assume $\rijt\sim\mathcal{IG}(\alpha,\beta)$ as a prior distribution, where $\mathcal{IG}(\alpha,\beta)$ is the inverse gamma distribution with the shape parameter $\alpha >0$ and the scale parameter $\beta >0$.

In contrast to (\ref{eq:cas}), the source image of diffuse noise, $\cijn$, should have a full-rank SCM. The generative model of such sources is defined as \cite{intro:FSCM}
\begin{equation}
\cijn \sim \mathcal{N}_{c}^{\mathrm{(multi)}}(\bm{0},\rijn\Rin),
\end{equation}
where $\mathcal{N}_c^{\mathrm{(multi)}} (\bm{0}, \bm{R})$ is the zero-mean multivariate circularly symmetric complex Gaussian distribution with a covariance matrix $\bm{R}$, and $\rijn>0$ and $\Rin\in\mathbb{C}^{M\times M}$ are the time-frequency variance and the time-invariant SCM for diffuse noise, respectively. The full-rank SCM of diffuse noise is modeled as
\begin{align}
\Rin &= \Rinilrma + \lambdai\bi\biH,\label{eq:byB}
\end{align}
where $\Rinilrma\in\mathbb{C}^{M\times M}$ is the rank-$(M-1)$ SCM estimated by the rank-1 method  in advance and is defined as 
\begin{align}
\Rinilrma &= \frac{1}{J}\sum_{j}\Hy(\Hy)^{\Htenchi},
\label{eq:helloRilrma} \\
\Hy &= \Wi^{-1}(\bm{w}_{i,1}^{\Htenchi}\xij,\ldots,\bm{w}_{i,n_t-1}^{\Htenchi}\xij,\nonumber\\
&\phantom{=}\phantom{=}\phantom{=}\phantom{=}  0,\bm{w}_{i,n_t+1}^{\Htenchi}\xij,\ldots,\bm{w}_{i,M}^{\Htenchi}\xij)^{\tenchi},\label{eq:rit}
\end{align}
$\bm{w}_{i,m}$ is the demixing filter estimated by the rank-1 method such as ILRMA, $\Hy$ is the sum of diffuse noise components whose scales are fixed by the back projection technique \cite{intro:projectionback}, and $^\Htenchi$ denotes the Hermitian transpose. Also, $\bi\in\mathbb{C}^M$ is the deficient basis for the rank-$(M-1)$ SCM $\Rinilrma$, which makes the SCM $\Rin$ full-rank. Note that $\bi$ is defined as a unit vector and its scale is defined by $\lambdai>0$. In conventional RCSCME, the direction of $\bi$ is fixed to the eigenvector of the zero eigenvalue in $\Rinilrma$ and only the scale $\lambdai$ is estimated.

\subsection{Update rules based on EM algorithm}
The parameters $\thc=\{\rijt,\rijn,\lambdai\}$ can be optimized by
maximum a posteriori (MAP) estimation based on the EM algorithm with the latent variables $\sijt$ and $\cijn$. Details of the derivation are described in \cite{INTRO:KUBO}. The update rules are obtained as follows: In the E-step,
\begin{align}
\TRin &= \Rinilrma + \Tlambdai\bi\biH,\\
\TRinjx &= \Trijt\ait(\ait)^{\Htenchi}+\Trijn\TRin,\\
\Hrijt &= \Trijt-(\Trijt)^2(\ait)^{\Htenchi}(\TRinjx)^{-1}\ait\nonumber\\
&\phantom{=}\mbox{}+|\Trijt\xij^{\Htenchi}(\TRinjx)^{-1}\ait|^2,\\
\Hrijn &= \Trijn\TRin-(\Trijn)^2\TRin(\TRinjx)^{-1}\TRin\nonumber\\
&\phantom{=}\mbox{}+(\Trijn)^2\TRin(\TRinjx)^{-1}\xij\xij^{\Htenchi}(\TRinjx)^{-1}\TRin  ,
\end{align}
and in the M-step,
\begin{align}
\rijt &\leftarrow\  \frac{\Hrijt+\beta}{\alpha+2},\\
\lambdai &\leftarrow\  \frac{1}{J}\sum_{j}\frac{1}{\Trijn|\biH\ui|^2}\ui^{\Htenchi}\Hrijn\ui,\\
\Rin &\leftarrow\  \Rinilrma+\lambdai\bi\biH,\\
\rijn &\leftarrow\  \frac{1}{M}\tr\bigl((\Rin)^{-1}\Hrijn\bigr),
\end{align}
where $\Tthc=\{\Trijt,\Trijn,\Tlambdai\}$ is the set of up-to-date parameters and $\ui$ is the eigenvector of $\Rinilrma$ that corresponds to the zero eigenvalue.

\section{Proposed RCSCME}
\subsection{Motivation}
In conventional RCSCME, the direction of deficient basis $\bi$ is fixed and only its scale $\lambdai$ is estimated. However, the candidate of the deficient basis is not unique because any complex vector that is not included in the space spanned by column vectors of $\Rinilrma$ can be used. In this paper, we propose a new vector optimization algorithm that estimates the deficient basis $\bi$ itself, and derive new update rules based on the EM algorithm. This method can be interpreted as a generalization of conventional RCSCME.

\subsection{New formulation of full-rank SCM and derivation of update rules}
To simultaneously parameterize the direction and the scale of the deficient basis in $\Rinilrma$, we model the full-rank noise SCM $\Rin$ as follows:

\begin{align}
\label{eq:byC}
\Rin &= \Rinilrma + \ci\ciH,
\end{align}
where $\ci\in\mathbb{C}^M$ is the deficient basis vector in $\Rinilrma$.
We model all the other variables in the same way as conventional RCSCME.

We estimate the parameters $\thp=\{\rijt,\rijn,\ci\}$ by MAP estimation based on the EM algorithm with the latent variables $\sijt$ and $\cijn$.
A $Q$ function is defined by the expectation of the complete-data
log-likelihood w.r.t. $p(\sijt,\cijn|\xij;\Tthp)$ as
\begin{align}
Q(\thp;\Tthp)&=\sum_{i,j}\biggl[-(\alpha+2)\log\rijt-M\log\rijn-\log\det\Rin\biggr.\nonumber\\
 &\phantom{=}\biggl.\mbox{}-\frac{\Hrijt+\beta}{\rijt}-\frac{\tr\bigl((\Rin)^{-1}\Hrijn\bigr)}{\rijn}\biggr]+\mathrm{const.},
 \label{eq:Qfunc}
\end{align}
where $\Tthp=\{\Trijt,\Trijn,\Tci\}$ is the set of up-to-date parameters and $\mathrm{const.}$ are the constant terms that do not depend on $\thp$.

In the E-step, $\Hrijt$ and $\Hrijn$ are obtained in the same way as the conventional RCSCME as follows:
\begin{align}
\TRin &= \Rinilrma + \Tci\TciH,\\
\TRinjx &= \Trijt\ait(\ait)^{\Htenchi}+\Trijn\TRin,\\
\Hrijt &= \Trijt-(\Trijt)^2(\ait)^{\Htenchi}(\TRinjx)^{-1}\ait\nonumber\\
&\phantom{=}\mbox{}+|\Trijt\xij^{\Htenchi}(\TRinjx)^{-1}\ait|^2,\\
\Hrijn &= \Trijn\TRin-(\Trijn)^2\TRin(\TRinjx)^{-1}\TRin\nonumber\\
&\phantom{=}\mbox{}+(\Trijn)^2\TRin(\TRinjx)^{-1}\xij\xij^{\Htenchi}(\TRinjx)^{-1}\TRin.  
\end{align}

Among some update rules in the M-step, we especially describe the derivation of the update rule of $\ci$. First, we differentiate the $Q$ function with respect to $\Lci$ as follows:
\begin{align}
    \frac{\partial Q}{\partial \Lci}&=-J(\Rin)^{-1}\ci+J(\Rin)^{-1}\HS(\Rin)^{-1}\ci,
\end{align}
where $^*$ denotes the complex conjugate and $\HS$ is defined as follows:
\begin{align}
    \HS:=\frac{1}{J}\sum_{j}\frac{1}{\Trijn}\Hrijn.
\end{align}
From the equation $\partial Q/\partial \Lci=\bm{0}$, we obtain
\begin{align}
    \ci=\HS(\Rin)^{-1}\ci.
    \label{eq:cHRc}
\end{align}
It is diffult to directly solve (\ref{eq:cHRc}) w.r.t. $\ci$ since $\Rin$ itself includes $\ci$. Hence, we use the following claim to resolve the equation.
\begin{claim}
\label{claim:1}
Let $\ui\in\mathbb{C}^M$ be a vector that satisfies $\Rinilrma\ui=\bm{0}$. Then, the following holds:
\begin{align}
    (\Rin)^{-1}\ci&=\frac{\ui}{\ciH\ui}.
\end{align}
\end{claim}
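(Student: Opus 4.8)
\emph{Proof plan.} Rather than trying to invert $\Rin$ directly (the Sherman--Morrison identity does not apply cleanly here because $\Rinilrma$ is singular), the plan is to guess the value of $(\Rin)^{-1}\ci$ and verify it. Set $\bm{w}:=\ui/(\ciH\ui)$; I will show that $\Rin\bm{w}=\ci$ and then left-multiply by $(\Rin)^{-1}$, which exists because the model (\ref{eq:byC}) is used with $\ci$ chosen so that $\Rin$ is full-rank.

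First I would check that the denominator $\ciH\ui$ is nonzero, so that $\bm{w}$ is well defined. If $\ciH\ui=0$, then by $\Rinilrma\ui=\bm{0}$ and (\ref{eq:byC}) we would have $\Rin\ui=\Rinilrma\ui+\ci(\ciH\ui)=\bm{0}$; since $\ui\neq\bm{0}$ spans the one-dimensional kernel of the rank-$(M-1)$ matrix $\Rinilrma$, this contradicts the full-rankness of $\Rin$. (As a consistency check, note that $\bm{w}$ is invariant under rescaling $\ui$, matching the fact that $\ui$ is determined only up to a scalar.)

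The verification is then a one-line computation, using $\Rinilrma\ui=\bm{0}$:
\[
\Rin\bm{w}=\bigl(\Rinilrma+\ci\ciH\bigr)\frac{\ui}{\ciH\ui}
=\frac{\Rinilrma\ui}{\ciH\ui}+\frac{\ciH\ui}{\ciH\ui}\ci=\ci.
\]
Left-multiplying by $(\Rin)^{-1}$ yields $(\Rin)^{-1}\ci=\bm{w}=\ui/(\ciH\ui)$, which is the claim.

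The only point requiring care --- the main obstacle, such as it is --- is the justification of $\ciH\ui\neq0$, which as above reduces to the full-rankness of $\Rin$ that is built into the model (\ref{eq:byC}). I would also note that the argument never uses optimality of $\ci$, so the identity holds for every iterate $\ci$ occurring in the EM algorithm; substituting it into (\ref{eq:cHRc}) turns that equation into $\ci=\HS\ui/(\ciH\ui)$, in which $\ci$ appears on the right-hand side only through the scalar $\ciH\ui$, which is what makes the ensuing update rule for $\ci$ tractable.
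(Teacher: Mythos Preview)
Your proof is correct and follows essentially the same route as the paper: compute $\Rin\ui=(\ciH\ui)\ci$ using $\Rinilrma\ui=\bm{0}$, then multiply through by $(\Rin)^{-1}/(\ciH\ui)$. Your version is slightly more careful in that you explicitly justify $\ciH\ui\neq 0$ via the full-rankness of $\Rin$, a point the paper leaves implicit.
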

\begin{proof}
First, it holds that
\begin{align}
    \Rin\ui&=\Rinilrma\ui+\ci\ciH\ui\nonumber\\&=(\ciH\ui)\ci.
\end{align}
Then, by multiplying $(\Rin)^{-1}/(\ciH\ui)$, we can obtain
\begin{align}
    (\Rin)^{-1}\ci&=\frac{\ui}{\ciH\ui}.
\end{align}
\end{proof}
By applying Claim \ref{claim:1} to (\ref{eq:cHRc}), the equation can be deformed as
\begin{align}
\label{eq:houkou}
    (\ciH\ui)\ci=\HS\ui.
\end{align}
By multiplying $\ui^{\Htenchi}$, we can obtain
\begin{align}
    |\ciH\ui|^{2}=\ui^{\Htenchi}\HS\ui,
\end{align}
and consequently $\ciH\ui$ can be represented as
\begin{align}
    \ciH\ui=\exp(-{\rm j}\phi)\sqrt{\ui^{\Htenchi}\HS\ui},
\end{align}
where $\phi\in[0, 2\pi)$ is an arbitrary constant
and ${\rm j}$ is an imaginary unit.
Hence, the solution of the equation  $\partial Q/\partial \Lci=\bm{0}$ is
\begin{align}
    \ci=\frac{\exp({\rm j}\phi)}{\sqrt{\ui^{\Htenchi}\HS\ui}}\HS\ui.
\end{align}
We use $\phi=0$ in our update rules since we model $\Rin$ by $\ci\ciH$, which does not depend on $\phi$. We obtain the update rules of all other parameters than $\ci$ in the same way as the conventional RCSCME. Finally, in the M-step, we update the parameters as follows:
\begin{align}
\rijt &\leftarrow\  \frac{\Hrijt+\beta}{\alpha+2},\\
\ci &\leftarrow\  \frac{\HS\ui}{\sqrt{\ui^{\Htenchi}\HS\ui}} ,\label{eq:honsitu}\\
\Rin &\leftarrow\  \Rinilrma+\ci\ciH,\\
\rijn &\leftarrow\  \frac{1}{M}\tr\bigl((\Rin)^{-1}\Hrijn\bigr).
\end{align}
(\ref{eq:honsitu}) can be interpreted as the {\it vector direction adaptation} by rotating $\ui$ via $\HS$ composed of posterior.

\section{Experimental Evaluation}
\drawfig{t}{0.80\linewidth}{impulse3}
        {\label{fig:impulse}Recording conditions of impulse responses.}
\drawfig{t}{0.7\linewidth}{babble_graph6}
        {\label{fig:babble}Behavior of SDR improvement in babble noise case.}

\subsection{Experimental conditions}
To confirm the efficacy of the proposed RCSCME, we conducted a BSE experiment using a simulated mixture of a target speech source and diffuse noise. To simulate the mixture, we convoluted dry sources with impulse responses from each position to four microphones as shown in Fig.~\ref{fig:impulse}. The diffuse noise was simulated by simultaneous reproduction from 19 positions and the target speech arrived from a closer position than each position of the diffuse noise. As the target speech source, we utilized six speech signals obtained from JNAS~\cite{KItou1999_JNAS}. We imitated the babble, station, traffic, and cafe noises.
We used 19 JNAS speech signals as the babble noise. The station, traffic, and cafe noise signals were obtained from DEMAND~\cite{Joachim2013_DEMAND}. An STFT was performed by using a 64-ms-long Hamming
window with a 32-ms-long shift. The speech-to-noise ratio was set
to 0 dB.

 We compared eight methods, namely, ILRMA~\cite{intro:ILRMA}, blind spatial subtraction array (BSSA)~\cite{2009zanryu}, multichannel Wiener filter with single-channel noise power estimation (MWF1)~\cite{MWF1}, multichannel Wiener filter with multichannel noise power estimation (MWF2)~\cite{MWF2}, the original FastMNMF~\cite{intro:fMNMF2}, FastMNMF initialized by ILRMA (ILRMA +FastMNMF), the conventional RCSCME~\cite{INTRO:KUBO} and the proposed RCSCME. As for BSSA, ILRMA was used instead of frequency-domain independent component
analysis in ~\cite{2009zanryu} and set the oversubtraction and flooring parameters to $1.4$ and $0$, respectively. For FastMNMF and ILRMA, the nonnegative matrix factorization (NMF) variables were initialized by nonnegative random
values, and the demixing matrix was initialized by the identity matrix. As for ILRMA+FastMNMF, the NMF variables were handed over from ILRMA to FastMNMF. Also, the SCM
was initialized by $\bm{a}_{i,n_{\rm t}}\bm{a}_{i,n_{\rm t}}^{\Htenchi}+\epsilon\II$
for ILRMA+MNMF and
$\bm{a}_{i,n_{\rm t}}\bm{a}_{i,n_{\rm t}}+\epsilon\sum_{n\not=n_{\rm t}}\bm{a}_{i,n}\bm{a}_{i,n}^{\Htenchi}$ for ILRMA+FastMNMF, where
$\II$ is the identity matrix and $\epsilon$ was set to $10^{-5}$. In ILRMA, which was used as the preprocessing for each method in the experiment, the number of bases was 10 and the number of iterations was 50. As for all methods, by selecting the channel whose kurtosis was the maximum in all channels, we blindly decided the index $n_{\rm t}$ of the target source in the demixed signals by ILRMA. In the conventional RCSCME, we utilized the minimum positive eigenvalue $\sigma_i$ of $\Rinilrma$ as the initial value of $\lambdai$. We used the inverse gamma distribution parameters $\alpha=2.5$ and $\beta=10^{-16}$ in the conventional RCSCME, which showed the best separation performance in the experiment of \cite{LAST:KUBO}. $\ci$ was initialized by $\sqrt{\sigma_i}\ui$ and we chose the parameters $\alpha=0.1$ and $\beta=10^{-16}$ in the proposed RCSCME experimentally. Source-to-distortion ratio (SDR) improvement \cite{EVincent2006_BSSEval} was used as a total evaluation
score. The SDR improvement was averaged over 10 parameter-initialization random seeds, four target directions, and six target speech sources; thus each SDR improvement score is the average value over 240 trials.

\begin{flushleft}
\begin{table}[tb]
\vspace{-4mm}
\caption{\label{table:babble}SDR improvements [dB] for each method and noise case.
Each term represents ”peak score / score after 200 iterations”}
{\arrayrulewidth=2pt
\setlength{\tabcolsep}{-2pt}
\begin{tabular}{c}
\hline
{
\setlength{\tabcolsep}{5pt}
\arrayrulewidth=0.4pt
  \begin{tabular}{@{\hspace{ 0mm}}c@{\hspace{ 0mm}}|cccc} 
    \begin{tabular}{l}
        Method
    \end{tabular} & 
    \begin{tabular}{c}
        babble\\noise
    \end{tabular} & 
    \begin{tabular}{c}
        station\\noise
    \end{tabular} &
    \begin{tabular}{c}
        traffic\\noise
    \end{tabular} &
    \begin{tabular}{c}
        cafe\\noise
    \end{tabular} \\\hline\hline
    \begin{tabular}{c}
        ILRMA
    \end{tabular} & 6.1 / - & 6.2 / - & 4.7 / - & 6.4 / - \\\hline
    \begin{tabular}{c}
        BSSA
    \end{tabular} & 6.8 / - & 6.9 / -   & 5.7 / - & 7.2 / - \\\hline
    \begin{tabular}{c}
        MWF1
    \end{tabular} & 6.1 / - & 6.9 / -  & 5.8 / - & 7.0 / - \\\hline
    \begin{tabular}{c}
        MWF2
    \end{tabular} & 6.9 / - & 7.2 / -   & 5.6 / - & 7.4 / - \\\hline
    \begin{tabular}{c}
        FastMNMF
    \end{tabular} & 1.7 / 1.7 & 2.6 / 2.5 & 2.9 / 2.8 & 2.6 / 2.6\\\hline
    \begin{tabular}{c}
        ILRMA\\+FastMNMF
    \end{tabular} & 6.5 / 6.2 & 6.6 / 6.6 & 5.4 / 5.4 & 7.3 / 7.3 \\\hline
    \begin{tabular}{c}
        Conventional\\RCSCME
    \end{tabular} & 8.4 / 7.5 & 9.5 / 8.9 & 7.5 / 7.2 & 9.6 / 8.9 \\\hline
    \begin{tabular}{c}
        Proposed\\RCSCME
    \end{tabular} & {\bf 8.8} / {\bf 8.4} & {\bf 10.7} / {\bf 10.4} & {\bf 8.8} / {\bf 8.7} & {\bf 10.6} / {\bf 10.2}  \\ 
  \end{tabular}
  }
  \\ \hline
  \end{tabular}
  }
  \vspace{-3mm}
\end{table}
\end{flushleft}
\vspace{-10mm}
\subsection{Result}
Figure \ref{fig:babble} shows the behavior example of the averaged SDR improvement of the proposed and the conventional RCSCME for each iteration under a babble noise condition. Although the preprocessing, i.e., ILRMA, is an iterative method, we show the averaged SDR improvement of ILRMA after 50 iterations as a reference. From this figure, we can see that the SDR improvement of the conventional and proposed RCSCMEs have a peak and the proposed RCSCME outperforms the conventional RCSCME from the viewpoint of both the peak score and the score after 200 iterations. 

Table \ref{table:babble} shows SDR improvements for each method under each noise condition. For FastMNMF, ILRMA+FastMNMF, conventional RCSCME, and proposed RCSCME, which are iterative methods, we display both the peak score and the score after 200 iterations. From this table, we reveal that the proposed RCSCME
outperforms all the conventional methods under all the noise conditions.

\section{Conclusion}
In this paper, we proposed a new algorithmic extension of RCSCME to improve the BSE performance. In the conventional RCSCME, a direction of the deficient basis is fixed and only the scale is estimated. In the proposed RCSCM, we accurately estimated the deficient basis itself as a vector variable by solving a vector optimization problem. Also, we derived a new update rules based on the EM algorithm. We confirmed that the proposed method outperformed conventional methods under several noise conditions.

\clearpage
\bibliographystyle{IEEEbib}
\bibliography{strings}

\end{document}